\documentclass[preprint]{elsarticle}

\usepackage{array,xspace,multirow,hhline,tikz,colortbl,tabularx,booktabs,fixltx2e,amsmath,amssymb,amsfonts,amsthm}
\usetikzlibrary{arrows}
\usepackage{algorithm}
\usepackage{algorithmic}
\usepackage{eqparbox}
\usepackage{verbatim,ifthen}
\usepackage{enumitem}
\usepackage{pifont}
\usepackage{ifthen}
\usepackage{calrsfs,mathrsfs}
\usepackage{bbding,pifont}
\usepackage{pgflibraryshapes}

	\usepackage{varioref}
		\usepackage{subfigure}

\definecolor{light-gray}{gray}{0.9}

\bibliographystyle{elsarticle-harv}






	%
	\newtheorem{lemma}{Lemma}%
	\newtheorem{theorem}{Theorem}%
	\newtheorem{corollary}{Corollary}%
	\newtheorem{example}{Example}
    	




	\newcommand\eat[1]{}

	\usepackage{enumitem}
	\setenumerate[1]{label=\rm(\it{\roman{*}}\rm),ref=({\it\roman{*}}),leftmargin=*}
	\newlength{\wordlength}

	\newcommand{\eqclass}[2][]{\ifthenelse{\equal{#1}{}}{[#2]}{[#2]_{\sim_{#1}}}}



	\newcommand{\squared}[1]{\fbox{#1}}

	

\usepackage{enumitem}
\setenumerate[1]{label=\rm(\it{\roman{*}}\rm),ref=({\it\roman{*}}),leftmargin=*}




\newcommand{\nbh}[1][]{
	\ifthenelse{\equal{#1}{}}{\nu}{\nu(#1)}
}

\newcommand{\cstr}[1][]{
	\ifthenelse{\equal{#1}{}}{\mathscr S}{\cstr(#1)}
}

\newcommand{\choice}[1][]{
	\ifthenelse{\equal{#1}{}}{\mathit{C}}{\choice(#1)}

		\newcommand{\ml}[1][]{\ensuremath{\ifthenelse{\equal{#1}{}}{\mathit{ML}}{\mathit{ML}(#1)}}\xspace}
		\newcommand{\sml}[1][]{\ensuremath{\ifthenelse{\equal{#1}{}}{\mathit{SML}}{\mathit{SML}(#1)}}\xspace}
		\newcommand{\sd}[1][]{\ensuremath{\ifthenelse{\equal{#1}{}}{\mathit{SD}}{\mathit{SD}(#1)}}\xspace}
		\newcommand{\rsd}[1][]{\ensuremath{\ifthenelse{\equal{#1}{}}{\mathit{RSD}}{\mathit{RSD}(#1)}}\xspace}
		\newcommand{\rd}[1][]{\ensuremath{\ifthenelse{\equal{#1}{}}{\mathit{RD}}{\mathit{RD}(#1)}}\xspace}
		\newcommand{\st}[1][]{\ensuremath{\ifthenelse{\equal{#1}{}}{\mathit{ST}}{\mathit{ST}(#1)}}\xspace}
		\newcommand{\bd}[1][]{\ensuremath{\ifthenelse{\equal{#1}{}}{\mathit{BD}}{\mathit{BD}(#1)}}\xspace}
		\newcommand{\pc}[1][]{\ensuremath{\ifthenelse{\equal{#1}{}}{\mathit{PC}}{\mathit{PC}(#1)}}\xspace}
		\newcommand{\dl}[1][]{\ensuremath{\ifthenelse{\equal{#1}{}}{\mathit{DL}}{\mathit{DL}(#1)}}\xspace}
		\newcommand{\ul}[1][]{\ensuremath{\ifthenelse{\equal{#1}{}}{\mathit{UL}}{\mathit{UL}(#1)}}\xspace}

			\newcommand{\indiff}{\ensuremath{\sim}}}
			
			\usepackage{boxedminipage}
			\usepackage{xspace}

	\newcommand{\vimplies}{\mathrel{\rotatebox{270}{$\implies$}}}	
	
	\newcommand{\vless}{\mathrel{\rotatebox{90}{$<$}}}	
	\newcommand{\vleq}{\mathrel{\rotatebox{90}{$\leq$}}}

	\newcommand{\slantgeq}{\mathrel{\rotatebox{-45}{$\geq$}}}
	
	\newcommand{\slantless}{\mathrel{\rotatebox{-45}{$<$}}}

\sloppy

\begin{document}

		
				
								\title{Achieving Envy-freeness and Equitability with \\Monetary Transfers }
		
		\author{Haris Aziz}\ead{haris.aziz@unsw.edu.au}
		
			\address{UNSW Sydney and Data61 CSIRO, Australia}

		\begin{keyword}

			\emph{JEL}: C62, C63, and C78
		\end{keyword}

	    \begin{abstract}
When allocating indivisible resources or tasks, an envy-free allocation or equitable allocation may not exist. We present a sufficient condition and an algorithm to achieve envy-freeness and equitability when monetary transfers are allowed. The approach works for any agent valuation functions (positive or negative) as long as they satisfy superadditivity. For the case of additive utilities, we present a characterization of allocations that can simultaneously be made equitable and envy-free via payments. 
\end{abstract}

		\maketitle


			\sloppy
		

\section{Introduction}

A fundamental problem that often arises in several settings is that of allocating items, resources, or tasks in a fair manner. 
There are several notions of fairness that have been considered in the literature. Among them, two of the strongest ones are envy-freeess (no agent should envy another agent's outcome) and equitability (every agent should get the same utility). When monetary transfers are not allowed, there may not exist any outcome that is envy-free or equitable. 
This leads to the question: under what conditions fairness can be achieved via monetary transfers?


\paragraph{Contributions}
We consider the situation where we use monetary transfers to achieve both properties simultaneously.
Our results include the following.

We first present a sufficient condition for allocations that can lead to equitability and envy-freeness by monetary transfers. 
In contrast to most of the related results that focus on additive or positive valuations, the statement holds for any superadditive valuations whether they are positive or negative. 
It also leads to a natural and simple algorithm to achieve envy-freeness and equitability. 
The result holds if we replace the payment balance conditions with the condition that agents get subsidies. 

For the domain of additive valuations, we provide a complete characterization of allocations  that can lead to equitability and envy-freeness by monetary transfers. 

We use our insights to design a polynomial-time distributed algorithm that finds an allocation and payment such that the new allocation achieves as much social welfare as a given allocation and the outcome satisfies envy-freeness, equitability, and payment balance. 
Finally, we discuss issues around computation and bounds for minimal payments to achieve fairness. 

%


\section{Related Work}

In the fair division literature (see, e.g. \citep{Aziz20a,BoChLa16,BrTa96a}), envy-freeness~\citep{Fol67} and equitability~\citep{FSVX19a,RoWe97a} are well-known fairness properties. When the items are divisible goods, an equitable and envy-free allocation is guaranteed to exist~\citep{Alon87}. On the other hand, when considering indivisible goods, neither of the two properties are guaranteed to be achievable. 

In this paper, we consider achieving these properties with the help of monetary transfers. 
Fair allocation with money is well-established, especially in the context of room-rent division. A feature of most of the work in the area is that each agent has demand for exactly one item (room)\citep{Ara95,Kli00,Mas87,Su99,Sve83}. 
More general models where envy-freeness is achieved via side-payments have been considered by \citet{HRS02} and \citet{MPR02}. 
\citet{CEM17} consider the distributed allocation of goods and focussed on convergence to envy-free and efficient outcomes via trades among agents.

More recently, there has been focus on computing envy-free allocations when agents have demands for multiple items and monetary transfers are allowed~\citep{HRS02}. In particular, \citet{HS19} popularized the problem of finding allocations for which minimal subsidies will result in envy-freeness.\footnote{The problem of computing minimal subsidies to achieve fairness can be viewed under the framework of ``control of fair division''~\citep{ASW16a} whereby fairness is achieved by minimal modification to the original problem.} 
In followup work, the computational of minimal subsidies has been considered in further depth both from the perspectives of exactly minimal subsidies~\citep{BDN+19a} and approximately minimal subsidies~\citep{CaIo20a}. 
In our model, the valuations can be positive or negative and we additionally target equitability. 
In particular, we use a simple formula for the payment given to each agent. 


\section{Setup}

We consider the setting in which there is a set $N$ of $n$ agents and a set $T$ of $m$ tasks . Each agent $i\in N$ has a valuation function $v_i:2^T \rightarrow \mathbb{R}$. The function $v_i$ specifies a value $v_i(A)$ for a given bundle $A\subseteq T$. The value can be positive or negative. We assume that $v_i(\emptyset)=0$ for all $i\in N$. 
 
The valuation function of an agent $i$ is \emph{supermodular} if for each $i\in N$, and $A,B\subseteq T$,  $v_i(A\cup B)\geq v_i(A)+v_i(B)-v_i(A\cap B)$. 
The valuation function of an agent $i$ is \emph{additive} if for each $i\in N$, and $A,B\subseteq T$ such that $A\cap B=\emptyset$, the following holds: $v_i(A\cup B)= v_i(A)+v_i(B)$.
The valuation function of an agent $i$ is \emph{superadditive} if for each $i\in N$, and $A,B\subseteq T$ such that $A\cap B=\emptyset$, the following holds: $v_i(A\cup B)\geq v_i(A)+v_i(B)$. Note that supermodularity and additivity are stronger conditions than superaddivity.  We assume that valuations satisfy the weaker notion of superaddivity. 


An \emph{allocation} $X=(X_1,\ldots, X_n)$ is a partitioning of the tasks into $n$ bundles where $X_i$ is the bundle allocated to agent $i$.
For an allocation $X$, the social welfare $SW(X)$ is $\sum_{i\in N}v_i(X_i)$.

An \emph{outcome} is a pair consisting of the allocation and the payments made by the agents. Formally, an outcome is a pair $(X,p)$ where $X=(X_1,\ldots X_n)$ is the allocation that specifies bundle $X_i\subseteq T$ for agent $i$ and $p$ specifies the payment $p_i$ made by agent $i$. If $p_i$ is negative, it means agent $i$ gets money. We say that $p$ is \emph{balanced} if $\sum_{i\in N}p_i=0$.

An agent $i$'s \emph{utility} for a bundle-payment pair $(X_j,p_j)$ is $u_i(X_j,p_j)=v_i(X_i)-p_j$. In other words, we assume quasi-linear utilities.
An outcome $(X,p)$  is \emph{envy-free} if for all $i,j\in N$, it holds that $u_i(X_i,p_i)\geq u_i(X_j,p_j)$. An outcome $(X,p)$  is \emph{equitable} if for all $i,j\in N$, $u_i(X_i,p_i)=u_j(X_j,p_j)$.
An allocation $X$ is \emph{envy-freeable} if there exists a payment function $p$ such that $(X,p)$ is envy-free. 
An allocation $X$ is \emph{equitable-convertible} if there exists a payment function $p$ such that $(X,p)$ is equitable. 
An allocation $X$ is \emph{EFEQ-convertible} if there exists a payment function $p$ such that $(X,p)$ is both equitable and envy-free.

For any given allocation $X$, the corresponding envy-graph 
is a complete directed graph with vertex set $N$. For any pair of agents $i,j\in N$ the weight of arc $(i,j)$ 
is the envy agent $i$ has for agent $j$ under the allocation $X$:  $w(i,j) \ =\  v_i(X_{j}) - v_i(X_i)$. For any path or cycle $C$ in the graph, the weight of the $C$ is the sum of weights of arcs along $C$.


\section{Sufficient and necessary conditions to achieve fairness}

We note that every allocation is trivially equitable-convertible: each agent can be given money so that their utility is equal to $\max_{i\in N}v_i(X_i)$. On the other hand, not every allocation is envy-freeable or EFEQ-convertible.

We say that an allocation is \emph{reassignment-stable},
if it maximizes the social welfare across all reassignments of its bundles to
agents. \citet{HS19} assumed positive additive utilities and presented the following elegant characterization of {envy-freeable} allocations. 


\begin{theorem}
	Under positive additive utilities, the following conditions are equivalent for a  given allocation:
	\begin{enumerate}
		\item the allocation is envy-freeable 
		\item the allocation is reassignment-stable
		\item for the allocation, there is no positive weight cycle in the corresponding envy-graph
	\end{enumerate}
	\end{theorem}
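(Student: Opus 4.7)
The plan is to establish the three equivalences by proving the cyclic chain $(i)\Rightarrow(ii)\Rightarrow(iii)\Rightarrow(i)$. None of the three steps relies on additivity in an essential way beyond the fact that reassignments leave the bundles intact, but I will stay within the hypothesis of the theorem.

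For $(i)\Rightarrow(ii)$, suppose $X$ is envy-freeable via payments $p$, so $v_i(X_i)-p_i \geq v_i(X_j)-p_j$ for all $i,j\in N$. For any permutation $\sigma$ of $N$, summing the envy-free inequality over each pair $(i,\sigma(i))$ yields
\[
\sum_{i\in N} \bigl(v_i(X_i)-p_i\bigr) \;\geq\; \sum_{i\in N} \bigl(v_i(X_{\sigma(i)})-p_{\sigma(i)}\bigr).
\]
Because $\sigma$ is a bijection, $\sum_i p_{\sigma(i)} = \sum_i p_i$, so the payments cancel and we obtain $SW(X) \geq \sum_i v_i(X_{\sigma(i)})$. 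Hence $X$ is reassignment-stable.

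For $(ii)\Rightarrow(iii)$, I would argue by contrapositive. Suppose the envy-graph of $X$ contains a cycle $C=(i_1,i_2,\ldots,i_k,i_1)$ of strictly positive weight, i.e. $\sum_{\ell=1}^{k} \bigl(v_{i_\ell}(X_{i_{\ell+1}}) - v_{i_\ell}(X_{i_\ell})\bigr) > 0$ (indices mod $k$). Consider the reassignment obtained by rotating bundles along the cycle, giving agent $i_\ell$ the bundle $X_{i_{\ell+1}}$ and leaving all other agents' bundles fixed. The resulting social welfare strictly exceeds $SW(X)$, contradicting reassignment-stability.

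For the decisive step $(iii)\Rightarrow(i)$, I would construct explicit envy-free payments. Since the envy-graph has no positive-weight cycle, for each $i\in N$ the quantity
\[
\ell_i \;=\; \max\bigl\{\, w(C) : C \text{ is a path in the envy-graph starting at } i \,\bigr\}
\]
is finite (where we include the empty path of weight $0$, so $\ell_i\geq 0$). For any arc $(i,j)$, concatenating it with a maximum-weight path starting at $j$ shows $\ell_i \geq w(i,j) + \ell_j$, i.e.\ $\ell_i - \ell_j \geq v_i(X_j) - v_i(X_i)$. Setting $p_i = -\ell_i$ gives
\[
u_i(X_i,p_i) - u_i(X_j,p_j) \;=\; v_i(X_i) - v_i(X_j) + \ell_i - \ell_j \;\geq\; 0,
\]
so $(X,p)$ is envy-free, closing the cycle of implications.

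The main obstacle is $(iii)\Rightarrow(i)$: the equivalence between absence of positive cycles and envy-freeability requires choosing payments carefully. Summing inequalities (as in step one) is routine, and the cycle-swap argument for step two is almost immediate; the nontrivial ingredient is recognizing that longest-path potentials in a cycle-negative digraph give exactly the right Lagrange multipliers to certify envy-freeness. The remaining two implications are essentially bookkeeping.
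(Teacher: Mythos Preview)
The paper does not supply its own proof of this theorem; it is stated as a known characterization due to \citet{HS19} (with the equivalence of the first two conditions also attributed to \citet{HRS02} and \citet{Mual09a}), and the paper moves on immediately to use it. So there is nothing in the paper to compare your argument against.

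That said, your proof is correct and is precisely the standard argument given in \citet{HS19}: the telescoping sum for $(i)\Rightarrow(ii)$, the cycle-rotation reassignment for $(ii)\Rightarrow(iii)$, and the longest-path potentials $\ell_i$ (with $p_i=-\ell_i$) for $(iii)\Rightarrow(i)$. One small point worth tightening in the last step: when you concatenate the arc $(i,j)$ with a maximum-weight path $P$ out of $j$, the resulting walk need not be a simple path, since $i$ may already lie on $P$. The fix is immediate---if $i$ appears on $P$, the prefix of the concatenated walk up to that occurrence is a cycle of weight at most $0$ by hypothesis, and deleting it leaves a genuine path from $i$ of weight at least $w(i,j)+\ell_j$---but as written your sentence ``concatenating it with a maximum-weight path starting at $j$ shows $\ell_i \geq w(i,j) + \ell_j$'' elides this case.
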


The equivalence between the first two conditions has been proved previously (see e.g., \citet{HRS02} and \citet{Mual09a}). Reassignment stability was referred to as local-efficiency by \citet{Mual09a}. 

We explore the conditions under which an allocation is EFEQ-convertible. Firstly, we show that even for positive additive utility, reassignment-stability is not sufficient to simultaneously achieve envy-freeness and equitability via payments.

\begin{example}
Even for positive additive valuations and a given envy-freeable allocation, there may not exist any payments to the agents to achieve both envy-freeness and equitability. Consider an instance with the following additive utilities. We consider an allocation $X$ indicated with the squares in which agent 1 gets $a$ and 2 gets $b$. 

\begin{center}
		\setlength{\tabcolsep}{6pt}
		\begin{tabular}{c|cccccccc}
			& $a$ & $b$\\
			\midrule
			$1$ & \squared{200}&100\\
			$2$ & 2&\squared{1}\\
		\end{tabular}
	\end{center}
	
	The allocation is envy-freeable because it is reassignment-stable. 
We show that there exist no payments to achieve both envy-freeness and equitability simultaneously. Without loss of generality suppose that agents are paid money. The minimum amount needed to obtain equitability is to pay 199 to agent 2. We can maintain equitability by giving equal amounts of money to both the agents. Note however, that the outcome will continue having envy. Agent 1 envies agent 2:
$u_i(X_1,p_1)=200+0 <100+199= u_1(X_2,p_2).$
\qed
	\end{example}  

The example above shows that reassignment-stability is not sufficient to achieve envy-freeness and equitability. Reassignment-stability was the key technique used by \citet{HS19} and \citet{BDN+19a} in their algorithmic results to achieve envy-freeness. 
In our quest to achieve \emph{both} envy-freeness and equitability via monetary transfers, we focus on allocations that are \emph{transfer-stable}. We say that an allocation $X$ is \emph{transfer-stable} if there exist no $i,j\in N$ such that $v_i(X_i\cup X_j)> v_i(X_i)+v_j(X_j)$. We note that under additive valuations, transfer-stability is stronger than the reassignment-stability  property.
   \begin{lemma}\label{lemma-prop}
	 Under additive valuations,  if an allocation is transfer-stable, then it is reassignment-stable.
	  \end{lemma}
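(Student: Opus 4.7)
The plan is to first unpack transfer-stability using additivity, then show the reassignment-stability condition follows by a straightforward summation. Since $X=(X_1,\ldots,X_n)$ is an allocation, the bundles $X_i,X_j$ are disjoint for $i\neq j$, so under additive valuations we have $v_i(X_i\cup X_j)=v_i(X_i)+v_i(X_j)$. The transfer-stability hypothesis $v_i(X_i\cup X_j)\leq v_i(X_i)+v_j(X_j)$ therefore reduces to the key inequality
\[
v_i(X_j)\leq v_j(X_j)\qquad\text{for all }i,j\in N,
\]
i.e., each bundle is valued highest by the agent who actually receives it.

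Next, I would take an arbitrary reassignment of the bundles to agents, which can be encoded as a permutation $\pi$ of $N$ so that agent $i$ receives $X_{\pi(i)}$. The social welfare of the reassigned allocation is
\[
\sum_{i\in N} v_i(X_{\pi(i)}) \;=\; \sum_{j\in N} v_{\pi^{-1}(j)}(X_j).
\]
Applying the inequality above with the index $i=\pi^{-1}(j)$ yields $v_{\pi^{-1}(j)}(X_j)\leq v_j(X_j)$ for every $j\in N$. Summing over $j$ gives
\[
\sum_{i\in N} v_i(X_{\pi(i)}) \;\leq\; \sum_{j\in N} v_j(X_j) \;=\; SW(X),
\]
so no reassignment can strictly improve on $X$, which is exactly reassignment-stability.

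There is no serious obstacle here; the only subtlety is the initial reduction, which crucially uses both additivity and the fact that the bundles in an allocation are pairwise disjoint. Without additivity one only has $v_i(X_i\cup X_j)\geq v_i(X_i)+v_i(X_j)$ by superadditivity, which points the wrong way and breaks the argument, so the lemma is genuinely a statement about the additive domain.
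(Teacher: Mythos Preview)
Your proof is correct and follows essentially the same idea as the paper's: both hinge on the reduction, via additivity, of transfer-stability to the termwise inequality $v_i(X_j)\le v_j(X_j)$, from which the reassignment-stability sum inequality is immediate. The paper argues by contrapositive (a welfare-improving reassignment forces some single bundle move to improve welfare, violating transfer-stability), whereas you argue the forward direction by summation; your write-up is more explicit about where additivity and disjointness are used.
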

	  \begin{proof}
Suppose there exists a reassignment which increases total welfare. This means that the movement of at least one bundle $X_j$ to some agent $i$ increases the social welfare, which implies that the allocation is not transfer-stable. 
\end{proof}

Since transfer-stability is a stronger property than reassignment-stability, a natural question is whether it can be used to achieve stronger fairness guarantees. We answer the question in the affirmative in the following lemma. The lemma applies to the class of superadditive valuations. 

\begin{lemma}\label{lemma:pay}
For a transfer-stable allocation $X$, suppose each agent $i$ makes a payment equal to $p_i=v_i(X_i)-SW(X)/n$. Then if agent valuations are superadditive, the outcome $(X,p)$  is envy-free and equitable.  
	\end{lemma}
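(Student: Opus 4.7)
The plan is to compute $u_i(X_i,p_i)$ directly and then reduce envy-freeness to a simple inequality that follows from transfer-stability and superadditivity.

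First, I would substitute the definition of $p_i$ into the utility expression for the own bundle. Since $u_i(X_i,p_i) = v_i(X_i) - p_i = v_i(X_i) - (v_i(X_i) - SW(X)/n) = SW(X)/n$, every agent obtains exactly the same utility $SW(X)/n$. This immediately gives equitability, independent of which valuation class we assume; no use of superadditivity or transfer-stability is needed for this part.

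Next, for envy-freeness I would expand $u_i(X_j,p_j) = v_i(X_j) - p_j = v_i(X_j) - v_j(X_j) + SW(X)/n$. Comparing with the computed $u_i(X_i,p_i) = SW(X)/n$, the envy-freeness inequality $u_i(X_i,p_i) \geq u_i(X_j,p_j)$ reduces to the clean condition $v_j(X_j) \geq v_i(X_j)$ for every pair $i,j$. The whole task is then to show this inequality.

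The main (and essentially only) technical step is to derive $v_j(X_j) \geq v_i(X_j)$ from transfer-stability plus superadditivity. Since $X_i$ and $X_j$ are disjoint in an allocation, superadditivity of $v_i$ yields $v_i(X_i \cup X_j) \geq v_i(X_i) + v_i(X_j)$. Transfer-stability, in its contrapositive form, gives $v_i(X_i \cup X_j) \leq v_i(X_i) + v_j(X_j)$ for all $i,j$. Chaining the two inequalities and cancelling $v_i(X_i)$ gives $v_i(X_j) \leq v_j(X_j)$, as required. I expect this chaining to be the whole substantive content of the proof; the potential pitfall is simply noting that transfer-stability is stated with a strict inequality in its negative form, so its contrapositive indeed gives a weak inequality that holds for all pairs (including $i=j$, which is trivial).

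Putting these pieces together establishes equitability from the definition of $p$ and envy-freeness from the two-line combination of superadditivity and transfer-stability.
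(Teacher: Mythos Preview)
Your proposal is correct and follows essentially the same argument as the paper: both derive $v_j(X_j)\geq v_i(X_j)$ by chaining transfer-stability with superadditivity, and both verify equitability by the direct computation $u_i(X_i,p_i)=SW(X)/n$. The only difference is expository order---you establish equitability first and then reduce envy-freeness to the key inequality, whereas the paper proves the inequality first and unwinds it into the utility comparison---but the content is identical.
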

\begin{proof}
	We first want to prove envy-freeness:  for all $i,j\in N$, it holds that $u_i(X_i,p_i)\geq u_i(X_j,p_j)$.
By transfer-stability of allocation $X$, 
\begin{align*}
v_i(X_i)+v_j(X_j)\geq v_i(X_i\cup X_j).
\end{align*}

Since $v_i$ is superadditive, it follows that 
\begin{align*}
v_i(X_i\cup X_j)\geq v_i(X_i) + v_i(X_j).
\end{align*}

By combining the two inequalities above, we get
\begin{align*}
&v_i(X_i)+v_j(X_j)\geq v_i(X_i) + v_i(X_j)\\
\iff&v_j(X_j)\geq v_i(X_j)\\
\iff&0\geq v_i(X_j)-v_j(X_j)\\
\iff&v_i(X_i)-v_i(X_i)+SW(X)/n\\
&\geq v_i(X_j)-v_j(X_j)+SW(X)/n\\
\iff&v_i(X_i)-(v_i(X_i)-SW(X)/n)\\
&\geq v_i(X_j)-(v_j(X_j)-SW(X)/n)\\
\iff&v_i(X_i)-p_i\geq v_i(X_j)-p_j\\
\iff&u_i(X_i,p_i)\geq u_i(X_j,p_j).
\end{align*}
The last inequality indicates that agent $i$ is not envious of $j$ and hence $(X,p)$ satisfies envy-freeness. 

Next, we argue that the outcome $(X,p)$ satisfies equitability. Each agent $i\in N$ gets utility $v_i(X_i)-p_i=v_i(X_i)- (v_i(X_i)-SW(X)/n)=SW(X)/n$. Since each agent has the same utility $SW(X)/n$, the outcome satisfies equitability. 
\end{proof}

The payment function $p_i=v_i(X_i)-SW(X)/n$ used in the lemma is not new. It is referred to as the \emph{Knaster} payments~\citep{Knas46a} and is inspired by the idea that each agent should get utility that is at least the proportionality guarantee $v_i(T)/n$ that was popularized by \citet{steinhaus1948problem}.
In the literature on fair allocation with money, Knaster payments have typically been applied on welfare maximizing allocations. \citet{Rait00a} discusses them prominently in the context of 2 agents and additive valuations. We show that it is sufficient to consider superadditive valuations and transfer-stable allocations for Knaster payments to achieve both equitability and envy-freeness.

Our insights also show that any social welfare maximizing allocation is EFEQ-convertible. 
\begin{corollary}
	For superadditive utilities, a social welfare maximizing allocation is EFEQ-convertible. 
	\end{corollary}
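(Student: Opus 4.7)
The plan is to reduce the corollary to \lemref{lemma:pay} by showing that any social welfare maximizing allocation is automatically transfer-stable, at which point the Knaster payments $p_i = v_i(X_i) - SW(X)/n$ exhibit the desired envy-free and equitable outcome. So the whole argument hinges on a single observation: transfer-stability is a necessary condition for welfare maximization.

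Concretely, I would suppose for contradiction that the welfare maximizing allocation $X$ is not transfer-stable, i.e.\ there exist $i,j \in N$ with
\[
v_i(X_i \cup X_j) > v_i(X_i) + v_j(X_j).
\]
Then I would construct a modified allocation $X'$ by merging $X_j$ into agent $i$'s bundle: set $X'_i = X_i \cup X_j$, $X'_j = \emptyset$, and $X'_k = X_k$ for all other $k$. Using $v_j(\emptyset)=0$, the social welfare of $X'$ is
\[
SW(X') = v_i(X_i \cup X_j) + \sum_{k \neq i,j} v_k(X_k) > v_i(X_i) + v_j(X_j) + \sum_{k \neq i,j} v_k(X_k) = SW(X),
\]
contradicting the welfare maximality of $X$. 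Hence $X$ is transfer-stable.

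With transfer-stability established and superadditivity assumed by hypothesis, \lemref{lemma:pay} applies directly: setting $p_i = v_i(X_i) - SW(X)/n$ yields an outcome $(X,p)$ that is simultaneously envy-free and equitable, so $X$ is EFEQ-convertible.

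There is no real obstacle here; the only subtlety worth flagging is that superadditivity is not needed for the contradiction step itself (the merging argument only uses $v_j(\emptyset)=0$), but it is crucial in the invocation of \lemref{lemma:pay}, which is where the fairness guarantees come from. The corollary is therefore essentially a one-line consequence of the lemma once the transfer-stability of welfare maximizers is observed.
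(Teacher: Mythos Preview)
Your proof is correct and follows exactly the paper's approach: the paper's proof is simply the two-sentence observation that a social welfare maximizing allocation is transfer-stable, followed by an invocation of \lemref{lemma:pay}. You have merely supplied the (straightforward) details of the transfer-stability step that the paper leaves implicit.
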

	\begin{proof}
		A social welfare maximizing allocation is transfer-stable. By Lemma~\ref{lemma:pay}, it is EFEQ-convertible. 
		\end{proof}

In Lemma~\ref{lemma:pay}, we have shown that for (super)additive valuations, transfer-stability is a sufficient condition to simultaneously achieve equitability and envy-freeness via payments. Next, we show that transfer-stability is also a necessary condition. 

\begin{lemma}\label{lemma:eqeq}
Under additive utilities, if an allocation is EFEQ-convertible, then it is transfer-stable. 
\end{lemma}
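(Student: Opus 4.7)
The plan is to unpack the hypotheses (envy-freeness and equitability of $(X,p)$) into two inequalities/equalities involving the same payment differences, combine them so the payments cancel, and then invoke additivity to convert the resulting valuation inequality into the transfer-stability inequality.

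Concretely, I would start by fixing an EFEQ-converting payment vector $p$. Equitability gives $v_i(X_i)-p_i = v_j(X_j)-p_j$ for every pair $i,j \in N$, which can be rearranged to $p_j - p_i = v_j(X_j) - v_i(X_i)$. Envy-freeness gives $v_i(X_i)-p_i \geq v_i(X_j)-p_j$, which rearranges to $p_j - p_i \geq v_i(X_j) - v_i(X_i)$. Substituting the equitability identity into the envy-freeness inequality yields $v_j(X_j) - v_i(X_i) \geq v_i(X_j) - v_i(X_i)$, hence
\begin{equation*}
v_j(X_j) \geq v_i(X_j) \qquad \text{for all } i,j \in N.
\end{equation*}

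The final step is to apply additivity. Since $X_i$ and $X_j$ are disjoint bundles, additivity gives $v_i(X_i \cup X_j) = v_i(X_i) + v_i(X_j)$. Combining this with the inequality $v_i(X_j) \leq v_j(X_j)$ proven above yields $v_i(X_i \cup X_j) \leq v_i(X_i) + v_j(X_j)$, which is exactly the transfer-stability condition.

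There is no real obstacle here: the argument is essentially a one-line manipulation once one notices that equitability is precisely the tool that removes the payment terms from the envy-freeness inequality. The only subtle point worth flagging is that the conclusion $v_j(X_j) \geq v_i(X_j)$ on its own is weaker than transfer-stability for general valuations, so additivity is genuinely used in the last step (and an analogous statement for merely superadditive valuations would not follow from this argument, explaining the hypothesis in the lemma).
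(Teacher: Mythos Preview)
Your proof is correct and considerably more direct than the paper's. The paper argues by contrapositive: it assumes $X$ is not transfer-stable, so (by additivity) there exist $i,j$ with $v_i(X_j) > v_j(X_j)$, and then works through an extended case analysis on the relative sizes of $v_i(X_i)$, $v_i(X_j)$, $v_j(X_i)$, $v_j(X_j)$, invoking along the way the Halpern--Shah characterization of envy-freeable allocations (no positive-weight envy cycle) to rule out subcases, and arguing in each remaining case that any payment restoring equitability necessarily reintroduces envy (or vice versa). Your argument bypasses all of this: by combining the equitability identity $v_i(X_i)-p_i = v_j(X_j)-p_j$ with the envy-freeness inequality $v_i(X_i)-p_i \ge v_i(X_j)-p_j$, the payments cancel immediately to give $v_j(X_j) \ge v_i(X_j)$, and additivity finishes. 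Your route is shorter, avoids the external characterization result, and isolates the essential mechanism (equitability is exactly what eliminates the payment terms from the envy constraints). The paper's case analysis does not appear to yield any extra information that your argument misses.
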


\begin{proof}
Suppose an allocation $X$ is not transfer-stable.
Then there exist agents $i,j\in N$ such that
\begin{align*}
	v_i(X_j)> v_j(X_j)
	\end{align*}
	
	The inequality is depicted in Figure~\ref{fig1}.
	
	   							\begin{figure}[h!]
	   								\begin{center}
	  \scalebox{0.9}{ 						\begin{tikzpicture}
	   							\tikzstyle{pfeil}=[->,>=angle 60, shorten >=1pt,draw]
	   							\tikzstyle{onlytext}=[]

	\node        (i) at (0,0) {$i$};
	\node        (xi) at (1,0.5) {$X_i$};
	\node        (xj) at (3,0.5) {$X_j$};
	\node        (ixi) at (1,0) {$v_i(X_i)$};
	\node        (ixj) at (3,0) {$v_i(X_j)$};

	\node        (i) at (0,-2) {$j$};
	\node        (jxi) at (1,-2) {$v_j(X_i)$};
	\node        (jxj) at (3,-2) {$v_j(X_j)$};

	\node        (ixj) at (3,-1) {$\color{red}{\vless}$};

	   						\end{tikzpicture}
							}
	   						\end{center}
	   						\caption{A case in the proof of Lemma~\ref{lemma:eqeq}}
							\label{fig1}
	   						\end{figure}
	
	If $X$ is not envy-freeable, we are done so we assume that $X$ is envy-freeable. 
	Then it must be that 
	\begin{align*}
		v_j(X_i)\leq v_i(X_i)
		\end{align*}
	or we can swap the allocations of $i$ and $j$ to get a welfare improvement which means that is not envy-freeable which implies that it is not EFEQ-convertible.
	The case is depicted in Figure~\ref{fig2}.
	
	   							\begin{figure}[h!]
	   								\begin{center}
	  \scalebox{0.9}{ 						\begin{tikzpicture}
	   							\tikzstyle{pfeil}=[->,>=angle 60, shorten >=1pt,draw]
	   							\tikzstyle{onlytext}=[]

	\node        (i) at (0,0) {$i$};
	\node        (xi) at (1,0.5) {$X_i$};
	\node        (xj) at (3,0.5) {$X_j$};
	\node        (ixi) at (1,0) {$v_i(X_i)$};
	\node        (ixj) at (3,0) {$v_i(X_j)$};

	\node        (i) at (0,-2) {$j$};
	\node        (jxi) at (1,-2) {$v_j(X_i)$};
	\node        (jxj) at (3,-2) {$v_j(X_j)$};

	\node        () at (3,-1) {$\vless$};
	
		\node        () at (1,-1) {$\color{red}{\vleq}$};

			
	   						\end{tikzpicture}
							}
	   						\end{center}
	   						\caption{A case in the proof of Lemma~\ref{lemma:eqeq}}
							\label{fig2}
	   						\end{figure}

	\medskip
By the characterization result of \citet{HS19}, we know that $X$ does not admit an envy-cycle. Therefore, either 

\begin{align*}
v_i(X_i)\geq v_i(X_j)
	\end{align*}
or 
\begin{align*}
v_j(X_j)\geq v_j(X_i)
	\end{align*}

We first consider the case 	
$v_i(X_i)\geq v_i(X_j)$ which is depicted in Figure~\ref{fig3}.

	   							\begin{figure}[h!]
	   								\begin{center}
	  \scalebox{0.9}{ 						\begin{tikzpicture}
	   							\tikzstyle{pfeil}=[->,>=angle 60, shorten >=1pt,draw]
	   							\tikzstyle{onlytext}=[]

	\node        (i) at (0,0) {$i$};
	\node        (xi) at (1,0.5) {$X_i$};
	\node        (xj) at (3,0.5) {$X_j$};
	\node        (ixi) at (1,0) {$v_i(X_i)$};
	\node        (ixj) at (3,0) {$v_i(X_j)$};

	\node        (i) at (0,-2) {$j$};
	\node        (jxi) at (1,-2) {$v_j(X_i)$};
	\node        (jxj) at (3,-2) {$v_j(X_j)$};

	\node        () at (3,-1) {$\vless$};
	
		\node        () at (1,-1) {$\vleq$};
		
		\node        () at (2,0) {$\color{red}{\geq}$};

	   						\end{tikzpicture}
							}
	   						\end{center}
	   						\caption{A case in the proof of Lemma~\ref{lemma:eqeq}}
							\label{fig3}
	   						\end{figure}

Since $v_i(X_j)> v_j(X_j)$, it follows that 
\begin{align*}
v_i(X_i)\geq v_i(X_j) >v_j(X_j).
	\end{align*}
	
	Since $i$ gets a strictly higher value than $j$ from her allocation, we need to pay money to agent $j$ to ensure equitability. In particular, agent $j$ is paid amount $v_i(X_i)-v_j(X_j)$. In that case agent $i$'s estimation of agent $j$'s outcome is $v_i(X_j)+(v_i(X_i)-v_j(X_j))$ where we know that $v_i(X_i)-v_j(X_j)>0$. Therefore agent $i$ is envious of agent $j$. Hence $X$ is not EFEQ-convertible.
	
	   							\begin{figure}[h!]
	   								\begin{center}
	  \scalebox{0.9}{ 						\begin{tikzpicture}
	   							\tikzstyle{pfeil}=[->,>=angle 60, shorten >=1pt,draw]
	   							\tikzstyle{onlytext}=[]

	\node        (i) at (0,0) {$i$};
	\node        (xi) at (1,0.5) {$X_i$};
	\node        (xj) at (3,0.5) {$X_j$};
	\node        (ixi) at (1,0) {$v_i(X_i)$};
	\node        (ixj) at (3,0) {$v_i(X_j)$};

	\node        (i) at (0,-2) {$j$};
	\node        (jxi) at (1,-2) {$v_j(X_i)$};
	\node        (jxj) at (3,-2) {$v_j(X_j)$};

	\node        () at (3,-1) {$\vless$};
	
		\node        () at (1,-1) {$\vleq$};

 		\node        () at (2,-2) {$\color{red}{\leq}$};
			
	   						\end{tikzpicture}
							}
	   						\end{center}
	   						\caption{A case in the proof of Lemma~\ref{lemma:eqeq}}
							\label{fig4}
	   						\end{figure}
	
	In Figure~\ref{fig3}, we assumed that $v_j(X_j)< v_j(X_i)$. Next we consider the other case $v_j(X_j)\geq v_j(X_i)$ which is depicted in Figure~\ref{fig4}. 

	\medskip
We distinguish between two cases (a) $v_i(X_i)\geq v_i(X_j)$ and (b) $v_i(X_i)< v_i(X_j)$.
	
	We already considered case (a) $v_i(X_i)\geq v_i(X_j)$ in the previous analysis (Figure~\ref{fig3}). Therefore, we now consider case (b) and assume that $v_i(X_i)< v_i(X_j)$ which is depicted in Figure~\ref{fig5}.

	   							\begin{figure}[h!]
	   								\begin{center}
	  \scalebox{0.9}{ 						\begin{tikzpicture}
	   							\tikzstyle{pfeil}=[->,>=angle 60, shorten >=1pt,draw]
	   							\tikzstyle{onlytext}=[]

	\node        (i) at (0,0) {$i$};
	\node        (xi) at (1,0.5) {$X_i$};
	\node        (xj) at (3,0.5) {$X_j$};
	\node        (ixi) at (1,0) {$v_i(X_i)$};
	\node        (ixj) at (3,0) {$v_i(X_j)$};

	\node        (i) at (0,-2) {$j$};
	\node        (jxi) at (1,-2) {$v_j(X_i)$};
	\node        (jxj) at (3,-2) {$v_j(X_j)$};

	\node        () at (3,-1) {$\vless$};
	
		\node        () at (1,-1) {$\vleq$};

 		\node        () at (2,-2) {$\color{black}{\leq}$};
		
		 \node        () at (2,0) {$\color{red}{<}$};
			
	   						\end{tikzpicture}
							}
	   						\end{center}
	   						\caption{A case in the proof of Lemma~\ref{lemma:eqeq}}
							\label{fig5}
	   						\end{figure}

We distinguish between two further final cases: case $v_i(X_i)\geq v_j(X_j)$ and the case $v_i(X_i)< v_j(X_j)$.
\begin{enumerate}
	\item $v_i(X_i)\geq v_j(X_j)$ which is depicted in Figure~\ref{fig6}.

	   							\begin{figure}[h!]
	   								\begin{center}
	  \scalebox{0.9}{ 						\begin{tikzpicture}
	   							\tikzstyle{pfeil}=[->,>=angle 60, shorten >=1pt,draw]
	   							\tikzstyle{onlytext}=[]

	\node        (i) at (0,0) {$i$};
	\node        (xi) at (1,0.5) {$X_i$};
	\node        (xj) at (3,0.5) {$X_j$};
	\node        (ixi) at (1,0) {$v_i(X_i)$};
	\node        (ixj) at (3,0) {$v_i(X_j)$};

	\node        (i) at (0,-2) {$j$};
	\node        (jxi) at (1,-2) {$v_j(X_i)$};
	\node        (jxj) at (3,-2) {$v_j(X_j)$};

	\node        () at (3,-1) {$\vless$};
	
		\node        () at (1,-1) {$\vleq$};

 		\node        () at (2,-2) {$\color{black}{\leq}$};
		
		 \node        () at (2,0) {$\color{black}{<}$};
		 
		 \node        () at (2,-1) {$\color{red}{\slantgeq}$};
			
	   						\end{tikzpicture}
							}
	   						\end{center}
	   						\caption{A case in the proof of Lemma~\ref{lemma:eqeq}}
							\label{fig6}
	   						\end{figure}

	Since $v_i(X_i)< v_i(X_j)$, agent $1$ is envious of agent $2$ and needs money to remove the envy. On other hand, we know that $v_i(X_j)> v_j(X_j)$ so agent $2$ needs more money to achieve equitability. Both the properties cannot be met. 
	\item $v_i(X_i)< v_j(X_j)$ which is depicted in Figure~\ref{fig7}.
	
	   							\begin{figure}[h!]
	   								\begin{center}
	  \scalebox{0.9}{ 						\begin{tikzpicture}
	   							\tikzstyle{pfeil}=[->,>=angle 60, shorten >=1pt,draw]
	   							\tikzstyle{onlytext}=[]

	\node        (i) at (0,0) {$i$};
	\node        (xi) at (1,0.5) {$X_i$};
	\node        (xj) at (3,0.5) {$X_j$};
	\node        (ixi) at (1,0) {$v_i(X_i)$};
	\node        (ixj) at (3,0) {$v_i(X_j)$};

	\node        (i) at (0,-2) {$j$};
	\node        (jxi) at (1,-2) {$v_j(X_i)$};
	\node        (jxj) at (3,-2) {$v_j(X_j)$};

	\node        () at (3,-1) {$\vless$};
	
		\node        () at (1,-1) {$\vleq$};

 		\node        () at (2,-2) {$\color{black}{\leq}$};
		
		 \node        () at (2,0) {$\color{black}{<}$};
		 
		 \node        () at (2,-1) {$\color{red}{\slantless}$};
			
	   						\end{tikzpicture}
							}
	   						\end{center}
	   						\caption{A case in the proof of Lemma~\ref{lemma:eqeq}}
							\label{fig7}
	   						\end{figure}
	
	Since $v_i(X_i)< v_i(X_j)$, agent $1$ is envious of agent $2$ and needs money to remove the envy. The exact amount needed to remove envy is $v_i(X_j)-v_i(X_i)$. But then the new utility of agent $i$ is $v_i(X_j)$ which we know (see Figure~\ref{fig7}) is more than $v_j(X_j)$ so equitability is violated. 
\end{enumerate}

We have proved that in all the cases, if an allocation is not transfer-stable, then it is not EFEQ-convertible.
\end{proof}



We obtain the following result: transfer-stability characterizes EFEQ-convertible allocations. 

\begin{theorem}\label{th:charac}
Under additive utilities, an allocation is EFEQ-convertible if and only if it is transfer-stable. 
\end{theorem}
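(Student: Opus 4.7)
The plan is to obtain the theorem as a direct combination of the two lemmas that were just established, observing that additive valuations form a subclass of the superadditive valuations to which Lemma~\ref{lemma:pay} applies.

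For the ``if'' direction, I would argue as follows. Suppose $X$ is transfer-stable. Since every additive valuation is also superadditive (this is immediate from the definitions: for disjoint $A,B$, additivity gives equality $v_i(A\cup B)=v_i(A)+v_i(B)$, which in particular satisfies the superadditive inequality $v_i(A\cup B)\ge v_i(A)+v_i(B)$), we may invoke Lemma~\ref{lemma:pay}. This lemma tells us that the Knaster payment scheme $p_i = v_i(X_i) - SW(X)/n$ witnesses both envy-freeness and equitability of $(X,p)$, so $X$ is EFEQ-convertible.

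For the ``only if'' direction, this is exactly the content of Lemma~\ref{lemma:eqeq}: if $X$ is EFEQ-convertible under additive utilities, then $X$ is transfer-stable. No additional argument is needed.

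The proof is therefore just a two-line assembly step and there is no real obstacle; all the substantive work has been carried out in the two preceding lemmas. The only thing worth flagging in the write-up is the brief observation that additivity is a stronger property than superadditivity, which is what allows Lemma~\ref{lemma:pay} (stated for superadditive valuations) to be applied in the additive setting of the theorem.
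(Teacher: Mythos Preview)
Your proposal is correct and matches the paper's own proof, which simply cites Lemma~\ref{lemma:pay} and Lemma~\ref{lemma:eqeq}. Your additional remark that additivity implies superadditivity (so that Lemma~\ref{lemma:pay} applies in the additive setting) is a helpful clarification, but otherwise the argument is identical.
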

\begin{proof}
	The statement follows from Lemma~\ref{lemma:pay} and Lemma~\ref{lemma:eqeq}.
	\end{proof}
	
	\begin{corollary}
		Assuming that we have access to an oracle that gives the utility of an agent for a bundle in constant time, then for additive valuations, there exists a $O(n^2)$ algorithm to check whether a given allocation is EFEQ-convertible.
		\end{corollary}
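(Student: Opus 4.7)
The plan is to use Theorem~\ref{th:charac} to reduce checking EFEQ-convertibility to checking transfer-stability, and then to observe that under additive valuations the transfer-stability condition collapses into a pairwise comparison that can be tested with a single oracle call per ordered pair of agents.

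First, I would invoke Theorem~\ref{th:charac}: under additive utilities, an allocation $X$ is EFEQ-convertible if and only if it is transfer-stable, i.e., there exist no $i,j \in N$ such that $v_i(X_i \cup X_j) > v_i(X_i) + v_j(X_j)$. Since $X = (X_1,\ldots,X_n)$ is an allocation (a partition of $T$), the bundles $X_i$ and $X_j$ are disjoint for $i \neq j$. Under additivity, this means $v_i(X_i \cup X_j) = v_i(X_i) + v_i(X_j)$, so the transfer-stability condition simplifies to: there exist no $i,j \in N$ with $i \neq j$ such that $v_i(X_j) > v_j(X_j)$.

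The algorithm is then immediate. For each ordered pair $(i,j)$ with $i \neq j$, query the oracle for $v_i(X_j)$ and $v_j(X_j)$ and test whether $v_i(X_j) > v_j(X_j)$. If any such pair is found, declare $X$ not EFEQ-convertible; otherwise declare it EFEQ-convertible. There are fewer than $n^2$ such ordered pairs and each test performs $O(1)$ oracle calls and comparisons, so the total running time is $O(n^2)$.

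There is no substantial obstacle here beyond assembling the preceding results. The only subtlety worth stating explicitly is the use of disjointness of $X_i$ and $X_j$ together with additivity to reduce the condition $v_i(X_i \cup X_j) > v_i(X_i) + v_j(X_j)$ to the much cheaper pairwise check $v_i(X_j) > v_j(X_j)$; this is what enables the oracle-based evaluation in constant time per pair and yields the claimed $O(n^2)$ bound.
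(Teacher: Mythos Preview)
Your proposal is correct and follows the same approach as the paper: invoke Theorem~\ref{th:charac} to reduce the question to checking transfer-stability, then observe that this amounts to at most $n^2$ pairwise checks each requiring $O(1)$ oracle calls. Your additional step of using additivity to simplify the condition to $v_i(X_j) > v_j(X_j)$ is valid but not strictly needed for the $O(n^2)$ bound, since even the unsimplified test $v_i(X_i \cup X_j) > v_i(X_i) + v_j(X_j)$ uses only a constant number of oracle calls per pair.
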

		\begin{proof}
			By Theorem~\ref{th:charac}, we need to check whether the allocation is transfer-stable or not.  \end{proof}
	
	\section{An algorithm to achieve fairness with payments}

	The following lemma shows that a greedy distributed approach can achieve a transfer-stable allocation. 

	\begin{lemma}\label{lemma:local}
	Suppose there exists an oracle that computes the value of an agent for a bundle of tasks in time $f(I)$. Then, for any given allocation $X$, a transfer-stable allocation can be computed in $O(n^4f(I))$ such that $SW(Y)\geq SW(X)$.
		\end{lemma}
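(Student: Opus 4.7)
The plan is a greedy local-search algorithm. Starting from $Y := X$, at each iteration I would scan all ordered pairs $(i,j)$ to find one violating transfer-stability, i.e., a pair with $v_i(Y_i \cup Y_j) > v_i(Y_i) + v_j(Y_j)$; if none exists the algorithm halts, and otherwise I would perform the ``transfer'' update $Y_i \leftarrow Y_i \cup Y_j$, $Y_j \leftarrow \emptyset$. Each such update strictly increases social welfare by exactly $v_i(Y_i \cup Y_j) - v_i(Y_i) - v_j(Y_j) > 0$, which immediately gives $SW(Y) \geq SW(X)$ on termination; termination itself is by construction when the stopping condition fires.

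The main obstacle is bounding the iteration count, since with real-valued superadditive valuations the welfare increments can be arbitrarily small and monotonicity alone is not enough. I plan to split each transfer into two types: a \emph{merger}, when $Y_i \neq \emptyset$ so two non-empty bundles are combined into one, and a \emph{reassignment}, when $Y_i = \emptyset$ so the bundle $Y_j$ is simply relocated to agent $i$. Mergers are easy to bound: each strictly reduces the number of non-empty bundles, so there are at most $n-1$ of them over the entire run. For reassignments, the crucial observation is that when $Y_i = \emptyset$ the violation condition reduces to $v_i(Y_j) > v_j(Y_j)$, so the sequence of owners of a fixed bundle $C \subseteq T$ has strictly increasing valuations for $C$ and therefore length at most $n$.

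To turn this into an overall $O(n^2)$ iteration count, I would observe that every bundle (regarded as a subset of $T$) that ever appears as some $Y_k$ is either one of the at most $n$ initial bundles of $X$ or is produced by a merger, of which there are at most $n-1$; hence at most $2n-1$ distinct non-empty subsets are ever owned by any agent during the execution. Applying the per-bundle bound of at most $n-1$ reassignments to each of these subsets then yields $O(n^2)$ reassignments in total, plus at most $n-1$ mergers, for $O(n^2)$ iterations overall. Each iteration scans the $O(n^2)$ ordered pairs and uses $O(1)$ oracle calls per pair at cost $f(I)$, giving $O(n^2 f(I))$ per iteration and $O(n^4 f(I))$ total. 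Transfer-stability of the output is immediate from the halting condition, and welfare monotonicity has already been noted, completing the plan.
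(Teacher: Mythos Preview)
Your proposal is correct and follows essentially the same approach as the paper: the same greedy transfer algorithm, the same welfare-monotonicity observation, and the same two-case split between transfers that reduce the number of non-empty bundles (your ``mergers'') and transfers of a bundle to an empty agent (your ``reassignments''). Your counting argument for the $O(n^2)$ iteration bound---tracking at most $2n-1$ distinct bundles and bounding reassignments per bundle via the strictly increasing sequence $v_i(C)$---is in fact more explicit than the paper's, which argues somewhat informally that a bundle ``cannot return'' to a previous owner between mergers.
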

	\begin{proof}
		We take any pair of agents $i,j\in N$ and check if $v_i(X_i\cup X_j)> v_i(X_i)+v_j(X_j)$. This can be checked in time $O(f(I)m)$ for a pair of agents and in 
$O(f(I)n^2)$ for all pairs of agents. 
	If $v_i(X_i\cup X_j)> v_i(X_i)+v_j(X_j)$, we give the allocation of $j$ to agent $i$ which results in agent $j$ getting an empty bundle. With each such operation the total social welfare increases. Hence, we the process terminates. Next, we prove that the process terminates in a polynomial number of steps. 
	
	With each operation, one of the two cases occurs.
The first case is that an additional agent $j$ completely loses her bundle. 
When a bundle going to another agent who has a non-empty bundle, then the number of agents who have an empty bundle increases. Such operations can happen at most $n-1$ times.  Now suppose that the the number of agents who have an empty bundle does not increase. This is only possible in the case that that agent $i$ had an empty allocation who gets the bundle $X_j$. Since each transfer of a bundle is welfare improving, it cannot happen that a bundle is returned to an agent $i$. Therefore such operations can happen at most $n-1$ times until the bundle will not move to any agent with an empty allocation.

	Hence these operations can happen at most $n^2$ times until no more transfers are possible. 
		\end{proof}

	Lemma~\ref{lemma:pay} and Lemma~\ref{lemma:local} give us an easy constructive method to achieve envy-freeness and equitability. 
	The method is presented as Algorithm~\ref{algo:EF}.
	We use the algorithm in the proof of Lemma~\ref{lemma:local} to obtain a transfer-stable allocation.\footnote{In practice in several domains such as routing-based task allocation, one would expect most reasonable and balanced allocations to be transfer-stable.}
	After that we use the payment function specified in Lemma~\ref{lemma:pay} to achieve envy-freeness and equitability.  Algorithm~\ref{algo:EF} leads to the following theorem.

										\begin{algorithm}[h!]
											  \caption{Envy-freeness and equitability with payments}
											  \label{algo:EF}
			\normalsize
											\begin{algorithmic}
												\REQUIRE  Allocation $Y$ and valuations functions $v_i$ for each agent $i\in N$. 							\ENSURE Allocation $X$ and payment function $p$
											\end{algorithmic}
											\begin{algorithmic}[1]
												\normalsize
			 \STATE Allocation $X\longleftarrow Y$
			 \WHILE{there exists some $i,j\in N$ s.t. $v_i(X_i\cup X_j)> v_i(X_i)+v_j(X_j)$}
			 \STATE $X_i\longleftarrow X_i\cup X_j$
			 \STATE $X_j\longleftarrow \emptyset$
			 \ENDWHILE
			 \STATE For each agent $i$, $p_i\longleftarrow v_i(X_i)-SW(X)/n$
			 \STATE $p_i'\longleftarrow p_i-\max \{p_j: j\in N, p_j>0\}$
				
												\RETURN $(X,p)$ where $p$ is balanced or $(X,p')$ where $p'$ is negative.
											\end{algorithmic}
										\end{algorithm}

	\begin{theorem}
	Suppose agents have superadditive valuations. Then for a given allocation $Y$, an allocation $X$ and payment function $p$ can be computed in polynomial time such that 
	\begin{enumerate}
		\item the outcome $(X,p)$ is equitable and envy-free, 
		\item $SW(X)\geq SW(Y)$, and
		\item $p$ is balanced.
	\end{enumerate}

		\end{theorem}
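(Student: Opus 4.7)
The plan is to chain together Lemma~\ref{lemma:local} and Lemma~\ref{lemma:pay}, taking Algorithm~\ref{algo:EF} as the concrete witness. First, I would feed the input allocation $Y$ into the transfer-improvement procedure described in Lemma~\ref{lemma:local} (the while-loop of Algorithm~\ref{algo:EF}), which repeatedly merges bundles whenever a beneficial transfer exists. That lemma guarantees termination in $O(n^4 f(I))$ steps and produces a transfer-stable allocation $X$ with $SW(X) \geq SW(Y)$; this settles claim (2) and also accounts for the bulk of the runtime.

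Next, I would invoke Lemma~\ref{lemma:pay} on the resulting transfer-stable $X$ under the superadditivity assumption. Setting $p_i = v_i(X_i) - SW(X)/n$ (the Knaster payments), the lemma guarantees that the outcome $(X,p)$ is simultaneously envy-free and equitable, which establishes claim (1). These payments can be computed with $O(n)$ additional calls to the valuation oracle once $SW(X)$ has been tabulated, so the overall runtime remains polynomial.

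For claim (3), budget balance of $p$ follows from a direct summation:
\begin{equation*}
\sum_{i \in N} p_i = \sum_{i \in N}\left(v_i(X_i) - \frac{SW(X)}{n}\right) = SW(X) - SW(X) = 0.
\end{equation*}
I do not anticipate any genuine obstacle here, since the technical content is entirely absorbed by the two preceding lemmas. The only thing worth emphasising is that the three requirements (transfer-stability, envy-freeness plus equitability, and balance) fit together without tension, because the Knaster formula is itself zero-sum, so the payments prescribed by Lemma~\ref{lemma:pay} are automatically balanced with no further adjustment required.
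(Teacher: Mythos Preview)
Your proposal is correct and matches the paper's approach exactly: the paper also derives the theorem by feeding $Y$ into the transfer-improvement loop of Lemma~\ref{lemma:local} to obtain a transfer-stable $X$ with $SW(X)\ge SW(Y)$, and then applying the Knaster payments of Lemma~\ref{lemma:pay}. Your explicit verification that $\sum_i p_i=0$ is a nice addition that the paper leaves implicit.
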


	Note that our result allows for some payments to be positive, i.e., some agents need to pay money. If we insist on simply using subsidies from a third party to achieve envy-freeness, then we can find the largest payment $p_i'$ made by an agent $i$ and give each agent an additional amount of $p_i'$ so that agents only get money and do not need to give money. To be precise, if the balanced payment is $p$, we can get negative or zero payments $p'$ as follows: $p_i'$ is set to $p_i-\max \{p_j: j\in N, p_j>0\}$. By doing this, we obtain the next theorem. 
	
		\begin{theorem}
		Suppose agents have superadditive valuations. Then for a given allocation $Y$, an allocation $X$ and payment function $p'$ can be computed in polynomial time  such that 
		\begin{enumerate}
			\item the outcome $(X,p')$ is equitable and envy-free, 
			\item $SW(X)\geq SW(Y)$, and
			\item for each $i\in N$, $p_i'\leq 0$.
		\end{enumerate}

			\end{theorem}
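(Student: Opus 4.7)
The plan is to bootstrap directly from the previous theorem, which already provides an outcome $(X,p)$ with balanced payments satisfying equitability, envy-freeness, and $SW(X) \geq SW(Y)$. The transformation from $p$ to $p'$ is just a uniform downward shift, so the work is mostly checking that this shift preserves the fairness properties while making all payments non-positive.

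First, I would invoke the previous theorem to obtain in polynomial time an allocation $X$ and a balanced payment vector $p$ such that $(X,p)$ is equitable and envy-free and $SW(X) \geq SW(Y)$. Then I would define the shift constant $c := \max\bigl(\{0\} \cup \{p_j : j \in N\}\bigr)$ and set $p_i' := p_i - c$ for every $i \in N$. (Including $0$ in the set handles the degenerate case where all original payments are already non-positive, which is not explicitly addressed in the algorithm description but causes no issue.) Computing $c$ takes $O(n)$ time, so the overall procedure remains polynomial.

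The next step is to verify the three claimed properties. For envy-freeness, observe that for any $i,j \in N$,
\begin{align*}
u_i(X_i, p_i') - u_i(X_j, p_j') &= (v_i(X_i) - p_i + c) - (v_i(X_j) - p_j + c) \\
&= u_i(X_i, p_i) - u_i(X_j, p_j) \geq 0,
\end{align*}
since $(X,p)$ was envy-free. For equitability, a similar calculation yields $u_i(X_i, p_i') = u_i(X_i, p_i) + c$ for every $i$, so the common utility value is merely shifted by $c$ and the outcome remains equitable. The social welfare of the allocation $X$ is independent of the payments, so $SW(X) \geq SW(Y)$ carries over directly. Finally, $p_i' = p_i - c \leq p_i - p_i \leq 0$ by the choice of $c$.

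There is no real obstacle here; the only mildly delicate point is making sure the shift constant is well-defined even when every $p_j$ is already non-positive, which the inclusion of $0$ in the maximum takes care of. The argument is essentially that adding a constant to every agent's payment is a symmetric transformation that preserves both pairwise utility differences (hence envy-freeness) and equal utility levels (hence equitability), so all the heavy lifting has already been done in the preceding theorem and in Lemma~\ref{lemma:pay}.
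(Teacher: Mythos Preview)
Your proof is correct and follows essentially the same approach as the paper: invoke the previous theorem to obtain $(X,p)$ with balanced payments, then shift all payments down by the largest positive payment so that every $p_i'$ is non-positive. The paper states this transformation in one sentence without verifying that the shift preserves envy-freeness and equitability; your explicit check that uniform shifts cancel in pairwise utility comparisons is a welcome addition, and your inclusion of $0$ in the maximum cleanly handles the edge case where all $p_j$ are already non-positive (the paper's formula $\max\{p_j : j\in N,\ p_j>0\}$ is technically undefined there).
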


Note that if we are not given an initial allocation, then we can achieve an outcome satisfying envy-freeness and equitability in an even simpler way. We bundle all the tasks together and then give the bundle $T$ to an agent $i$ for which value $v_i(T)$ is the highest. By construction, the allocation is transfer-stable. We then implement the payment function as specified in Algorithm~\ref{algo:EF}.

\section{Minimal payments to achieve fairness}

When using payments to achieve fairness, one may want to use the \emph{minimal} exchange of money or subsidy to achieve fairness. The problem has been explored by \citet{HS19} and \citet{BDN+19a} when the goal is envy-freeness.

Suppose a given allocation $X$ is EFEQ-convertible. Then there is a linear-time algorithm to compute the minimal payments to achieve both envy-freeness and equitability. The key insight is that for 
EFEQ-convertible allocations, it is sufficient to simpy focus on achieving equitability. Any additional and uniform payment for all agents does not affect envy-freeness. Therefore, we can give agents sufficient money to ensure that each agent has utility equal to $\max_{i\in N}v_i(X_i)$. Next, we consider the problem in which we can choose a suitable allocation so as to require minimal payments to acheive fairness.

\begin{theorem}
	Computing the minimum payments to simultaneously achieve envy-freeness and equitability is strongly NP-hard. Unless $\text{P}=\text{NP}$, there exists no deterministic polynomial-time algorithm that approximates within any given positive factor the minimum payments to simultaneosly achieve envy-freeness and equitability.
	\end{theorem}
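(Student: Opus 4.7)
The plan is to reduce from \textsc{3-Partition}, which is strongly NP-complete. Given $3n$ positive integers $a_1,\ldots,a_{3n}$ summing to $nB$ with $B/4 < a_k < B/2$, we build an instance with $n$ agents, $3n$ tasks, and identical additive valuations $v_i(t_k) = a_k$ for all $i,k$. The numbers in the constructed instance are exactly those of the \textsc{3-Partition} input, so the reduction is polynomial even under unary encoding, which is what strong NP-hardness requires.

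Under identical additive valuations every allocation $X$ is transfer-stable, since for all $i,j$ we have $v_i(X_i \cup X_j) = v(X_i) + v(X_j) = v_i(X_i) + v_j(X_j)$; hence by \thmref{th:charac} every allocation is EFEQ-convertible. For any such $X$, the cheapest subsidy realizing envy-freeness and equitability equals $n \cdot \max_{i \in N} v_i(X_i) - SW(X)$, obtained by raising every agent up to the maximum utility level (uniform shifts preserve both properties). In our construction $SW(X)=nB$ is constant across allocations, so minimizing the total payment is equivalent to finding a partition that minimizes $\max_i v(X_i)$, i.e., multiway makespan minimization on $n$ machines.

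If the \textsc{3-Partition} instance is a YES-instance, the triple-partition yields $\max_i v(X_i) = B$ and optimum payment $0$. Otherwise, by integrality, every allocation has some bundle of value at least $B+1$, so the optimum is at least $n$. This already gives strong NP-hardness of deciding whether the optimum equals $0$. For the inapproximability part, any polynomial-time algorithm with approximation guarantee $\alpha$ (for any positive $\alpha$) must return $0$ whenever $\mathrm{OPT}=0$, since $\alpha \cdot 0 = 0$; such an algorithm would distinguish YES from NO instances of \textsc{3-Partition} in polynomial time, contradicting $\mathrm{P} \neq \mathrm{NP}$.

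The main obstacle will be to make the argument robust to the precise formulation of ``minimum payments'', since the claim should hold in each natural variant. For total subsidy (all $p_i \leq 0$) the argument above applies verbatim; for balanced transfer $\sum_i |p_i|$ subject to $\sum_i p_i = 0$, a YES-instance still admits the zero-payment balanced allocation while a NO-instance forces a strictly positive balanced transfer, so the $0$-versus-positive gap, and hence both conclusions, persist. A secondary and smaller concern is verifying transfer-stability under identical valuations, which follows immediately from the definition as noted above.
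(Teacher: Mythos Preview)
Your proposal is correct and follows essentially the same approach as the paper: both exploit identical additive valuations so that envy-freeness coincides with equitability, and both obtain hardness and inapproximability from the $0$-versus-positive gap for the minimum payment. The paper merely invokes the known NP-completeness of deciding whether an envy-free allocation exists under identical valuations, whereas you spell out the underlying \textsc{3-Partition} reduction and the minimum-subsidy formula explicitly; this makes your argument more self-contained but not conceptually different.
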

The proof follows from the fact that checking whether there exists an envy-free allocation (that requires zero payments to achieve envy-freeness) is NP-complete if the agents have  identical valuations. Under identical valuations, envy-freenss also implies equitability. The inapproximability result follows from the fact than even checking whether zero payment is required is NP-hard.

	   							\begin{figure}[t!]
	   								\begin{center}
	  \scalebox{0.9}{ 						\begin{tikzpicture}
	   							\tikzstyle{pfeil}=[->,>=angle 60, shorten >=1pt,draw]
	   							\tikzstyle{onlytext}=[]

	\node        (EQEQ) at (0,2) {EFEQ-convertible};
	
		\node        () at (3,2) {$\iff$};
		
			\node        () at (0,1) {$\vimplies$};

				\node        () at (6,1) {$\vimplies$};

	\node        (EQ) at (0,0) {envy-freeable};
	
		\node        () at (3,0) {$\iff$};
	
	\node        (EQEQ) at (6,2) {transfer-stable};

	\node        (EQ) at (6,0) {reassignment-stable};
	
			\node        () at (0,-1) {$\vimplies$};
	
	\node        (EQ) at (0,-2) {equitable-convertible};
	\node        () at (3,-2) {$\iff$};
	
		\node        () at (6,-1) {$\vimplies$};
	
		\node        (EQ) at (6,-2) {no restriction};

	   						\end{tikzpicture}
							}
	   						\end{center}
	   						\caption{Properties of allocations under additive valuations}
							\label{fig:relations}
	   						\end{figure}
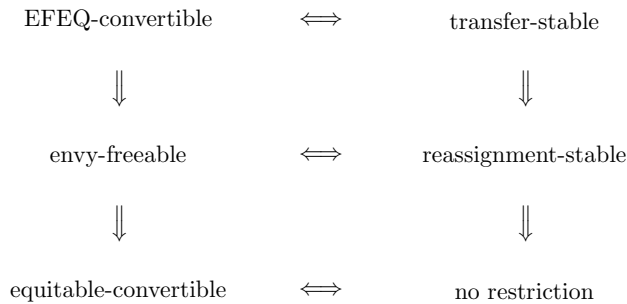

\medskip

In order to achieve reasonable bounds on the maximum subsidy to achieve envy-freeness, \citet{HS19} and \citet{BDN+19a} assume valuations in which $v_i(t)\leq 1$ for all $i\in N$ and $t\in T$. Our goal is to achieve envy-freeness and equitability simultaneously. 
We first note that we inherit any lower bounds on subsidies required to get envy-freeness. Therefore, when an EFEQ-convertible allocation is given, the minimum subsidy required is $(n-1)m$ in the worst case. We also observe that at least $(n-1)m$ payment may be required even when an EFEQ-convertible allocation is \emph{not} given and we can choose an EFEQ-convertible allocation intelligently. The reason is that in order to achieve transfer-stability, it may be the case that all the items need to be given to the same agent.

\section{Conclusions}

Achieving fairness via payments is an interesting reseach direction. In this paper, we focussed on envy-freeness and equitability and presented a characterization of allocations that are EFEQ-convertible allocation.

Figure~\ref{fig:relations} highlights some of the insights from this paper and the paper of \citet{HS19}. 
It will be interesting to explore other desirable fairness properties.

		\section*{Acknowledgements}
		Aziz is supported by a UNSW Scientia Fellowship, and Defence Science and Technology (DST) under the project ``Auctioning for distributed multi vehicle planning'' (DST 9190). He thanks Ioannis Caragiannis, Alex Lam  and Bo Li for helpful comments and pointers.
		%

		\end{document}